\newcommand{\<}{\langle}
\newcommand{\up}{\uparrow}
\newcommand{\down}{\downarrow}
\renewcommand{\>}{\rangle}
\renewcommand{\(}{\left(}
\renewcommand{\)}{\right)}
\renewcommand{\[}{\left[}
\renewcommand{\]}{\right]}
\renewcommand{\d}{\partial}
\newcommand{\eps}{\epsilon}
\titleformat{\paragraph}[runin]
        {\bfseries}
        {}
        {0pt}
        {\noindent}
        [~--]
\titlespacing*{\paragraph}{0pt}{6pt}{4pt}
\newcommand{\acp}[1]{\textcolor{Aquamarine}{#1}}
\newcommand{\yz}[1]{\textcolor{cyan}{#1}}
\begin{document}
\title{Sequential quantum simulation of spin chains with a single circuit QED device}

\author{Yuxuan Zhang}
\affiliation{Department of Physics, University of Texas at Austin, Austin, TX 78712, USA}
\affiliation{Department of Physics and Centre for Quantum Information and Quantum Control, University of Toronto,
60 Saint George St., Toronto, ON M5S 1A7, Canada}

\author{Shahin Jahanbani} 
\affiliation{Department of Physics, University of Texas at Austin, Austin, TX 78712, USA}
\affiliation{Department of Physics, University of California, Berkeley, California 94720, USA}
\author{Ameya Riswadkar}
\author{S. Shankar}
\affiliation{Department of Electrical and Computer Engineering, University of Texas at Austin, Austin, TX 78712, USA}

\author{Andrew C. Potter}
\affiliation{Department of Physics and Astronomy, and Stewart Blusson Quantum Matter Institute, University of British Columbia, Vancouver, BC, Canada V6T 1Z1}

\begin{abstract}
Quantum simulation of many-body systems in materials science and chemistry are promising application areas for quantum computers. However, the limited scale and coherence of near-term quantum processors pose a significant obstacle to realizing this potential. 
Here, we theoretically outline how a single-circuit quantum electrodynamics (cQED) device, consisting of a transmon qubit coupled to a long-lived cavity mode, can be used to simulate the ground state of a highly-entangled quantum many-body spin chain. We exploit recently developed methods for implementing quantum operations to sequentially build up a matrix product state (MPS) representation of a many-body state.
This approach re-uses the transmon qubit to read out the state of each spin in the chain and exploits the large state space of the cavity as a quantum memory encoding inter-site correlations and entanglement.
We show, through simulation, that analog (pulse-level) control schemes can accurately prepare a known MPS representation of a quantum critical spin chain in significantly less time than digital (gate-based) methods, thereby reducing the exposure to decoherence.
We then explore this analog-control approach for the variational preparation of an unknown ground state. We demonstrate that the large state space of the cavity can be used to replace multiple qubits in a qubit-only architecture, and could therefore simplify the design of quantum processors for materials simulation.
We explore the practical limitations of realistic noise and decoherence and discuss avenues for scaling this approach to more complex problems that challenge classical computational methods.
\end{abstract}
\maketitle

Achieving control over quantum state spaces that are too large to explore classically is a central challenge of quantum computing. The most common approach has been to build quantum devices out of arrays of two-level qubits. This task requires achieving coherent control of $\gtrsim 40$ interacting qubits to exceed exact classical simulation, and presents significant challenges for fabrication, tune-up, and cross-talk minimization~\cite{aaronson2011computational,aaronson2016complexity,arute2019quantum}. By contrast, utilizing more than two quantum states per device can reduce the size of processors required to exceed classical simulation. Circuit quantum electrodynamics (cQED) devices~\cite{blais2021circuit,vandersypen2005nmr} consists of a transmon qubit interacting with a superconducting cavity that can be modeled as a many-level quantum harmonic oscillator. In these devices, the non-linear coupling between the qubit and cavity enables universal control over the qubit and cavity. Such universal control has been employed for the preparation of non-Gaussian states of the oscillator~\cite{kudra2022robust}, enabling, for example, implementation of a quantum error correcting code on a single device~\cite{gottesman2001encoding,mirrahimi2014dynamically,nigg2014deterministic,heeres2017implementing}. 
Few-mode cQED devices have also been exploited for proof-of-concept few-body chemistry simulations~\cite{wang2020efficient,wang2023observation}. These demonstrations raise the question: Can the larger state space of cQED devices be leveraged for even larger scale simulations of complex quantum many-body problems relevant to condensed matter physics and material science?

The reduction of hardware complexity for performing classically inaccessible simulations can be further enhanced by hardware-efficient approaches to quantum algorithms. Quantum circuit tensor network state (qTNS) techniques~\cite{schon2005sequential,barratt2021parallel,foss2021holographic}, use repeated mid-circuit qubit reset and reuse to sequentially simulate many-body quantum states. 
By exploiting the efficient compression~\cite{orus2019tensor} of physically interesting states, such as low-energy states of local Hamiltonians, qTNS methods enable simulation of many-body systems relevant to condensed-matter physics and materials science with much smaller quantum memory than would be required to directly encode the many-body wave-function.
Rather than directly encoding the quantum many-body wave function into qubits, sequential simulation involves implementing a sequence of quantum operations that allows one to sample properties of the many-body state along a spatial direction, without ever storing the full state in quantum memory.

In this work, we theoretically explore the synthesis of these two approaches to reduce the hardware requirements to perform complex materials simulations. Specifically, we simulate the use of a single cQED device consisting of a transmon qubit coupled to a long-lived cavity for variational preparation of 1d quantum circuit matrix product states (qMPS) that approximately represent the highly-entangled ground state of a critical, non-integrable spin chain. In this approach, the transmon qubit is used to represent the state of a spin on a single site and is repeatedly reset and reused. The cavity operates as a many-level quantum memory that retains coherent information about inter-site correlations and entanglement. 

\begin{figure*}
    \centering.
    \includegraphics[width=0.95\textwidth]{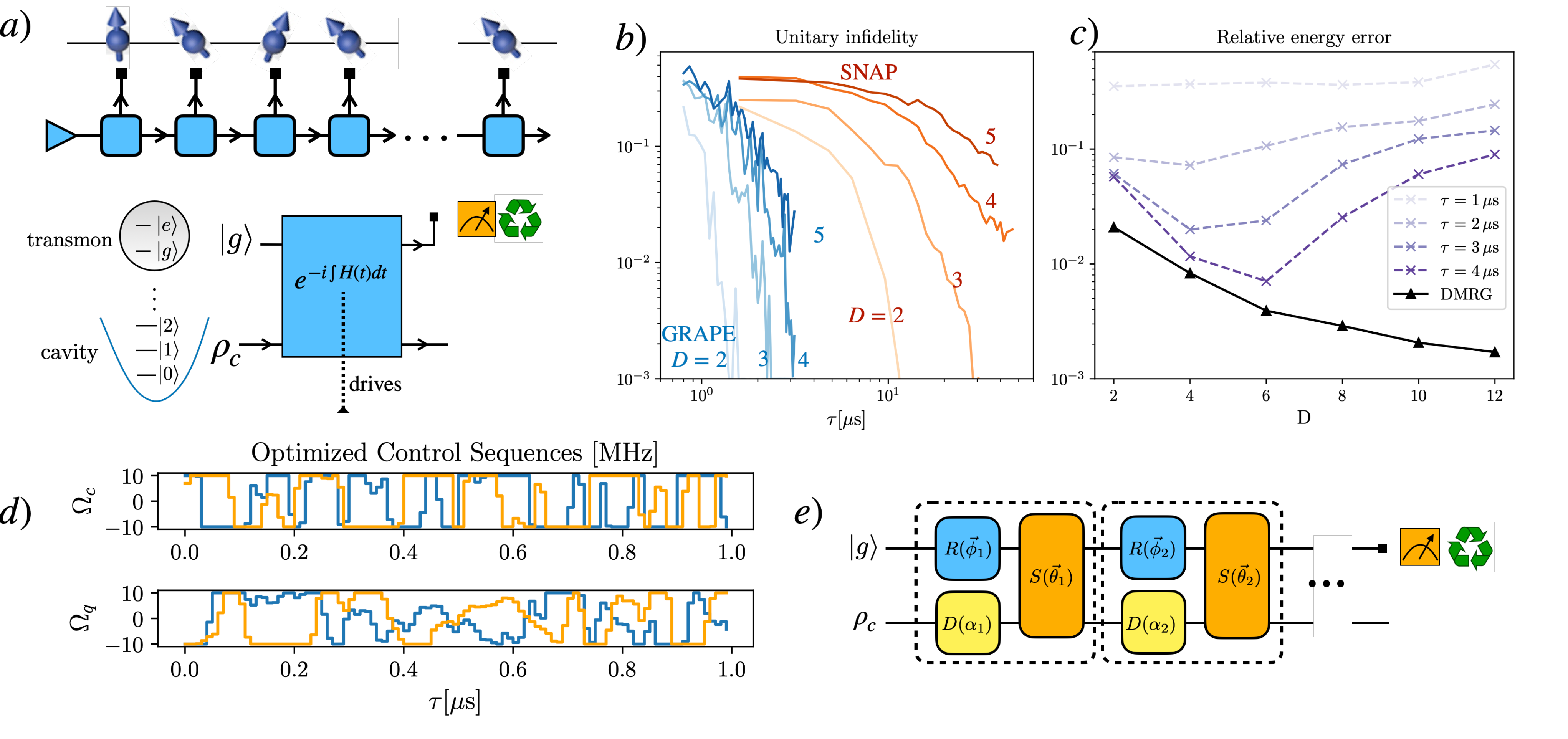}
    \vspace{-12pt}
    \caption{{\bf Sequential preparation of a qMPS on a cQED device} a) Graphical representation of a matrix product state for a spin chain (top) and its implementation as a sequential circuit with qubit reuse. Boxes represent three-index tensors with physical index (vertical line ending in box) representing the state of a spin-1/2 physically implemented by a transmon qubit, and bond indices (horizontal lines) representing inter-site entanglement, physically implemented by a superconducting cavity. Each tensor is implemented with a unitary time evolution entangling the qubit and cavity, followed by measuring the transmon qubit and resetting to its ground state $|g\>$, which is then reused for the next site in the chain. Arrows indicating causality of circuit implementation are opposite to the conventional isometry arrows.  b) Infidelity for synthesizing an isometry representing various bond-dimension $D$ approximations of a critical Ising chain ground state via GRAPE pulse-level control (d) or discrete circuits based on SNAP gates (e). The GRAPE-based method requires significantly less time to achieve the same accuracy, making it less susceptible to decoherence. c) Relative (fractional) energy error for the resulting ground state of Eq.~\ref{eq:sdim} for the GRAPE-based method for various bond dimensions and time length of the waveform.  }
    \label{fig:schematic}
\end{figure*}

A key challenge is to control the interactions between the qubit and cavity to implement unitary operations that represent the tensors of the MPS. Previous works largely focused on the use of the transmon as an ancillary qubit to achieve non-linear control over the cavity without storing information. By contrast, for this application, we will need to coherently control both the qubit and oscillator in tandem. 
To this end, we explore two approaches to controlling the cQED device to implement variational qMPS simulations: i) ``analog" pulse-level control in which the qubit and cavity drive waveforms are treated as variational parameters~\cite{heeres2017implementing}, and ii) a ``digital" gate-based approach in which control pulses are pre-compiled into a discrete set of (parameterized) gates that are concatenated into a circuit~\cite{heeres2015cavity,krastanov2015universal,fosel2020efficient,lin2022efficient}.

Our work shows that, despite the much larger number of variational parameters that must be optimized, the pulse-level approach dramatically outperforms the gate-based approach in the time (and hence error rate) needed to synthesize unitaries relevant to prepare a physical qMPS. Specifically, the analog approach enables high-fidelity synthesis in nearly an order of magnitude shorter time than the gate-based method, thereby significantly reducing the exposure to noise and decoherence.

We then explore the complexity of using the analog control approach to variationally optimize qMPS approximations of a highly entangled critical spin-chain ground state, including assessing the effects of realistic levels of decoherence. 
This approach synthesizes the gate-free ctrl-VQE method of~\cite{meitei2021gate}, with qubit-efficient qMPS methods.
We show that the number of cavity levels that can be effectively used is ultimately limited by decoherence, and estimate that with current technology, a single cQED device can implement qMPS representations with bond dimensions up to $6$, which would require $4$ qubits in a qubit-only architecture.
Finally, we conclude by discussing pathways for scaling this approach to multi-mode cavities in order to achieve a quantum advantage over classical methods for materials simulation tasks.

\paragraph{Sequential/holographic simulation with a cQED device}
The idea of using sequential circuits to simulate 1d many-body systems was first introduced in~\cite{schon2005sequential}, used to build simple many-body states in a cavity QED setup~\cite{schon2005sequential}, and later generalized into a framework for variational ground state preparation~\cite{kim2017holographic,kim2017noise,kim2017robust,foss2021holographic}, quantum dynamics simulation~\cite{foss2021holographic,lin2021real,astrakhantsev2022time}, and higher-dimensional qTNS~\cite{zaletel2020isometric,lin2022efficient}. Here, we briefly review the key ideas of this method by describing its implementation for simulating a 1d spin-1/2 chain with a single cQED device. We restrict our attention to the two lowest energy qubit states of the transmon: $|0\>,|1\>$, and define the corresponding transmon Pauli operators $\vec{\sigma}$. We model the cavity as a quantum harmonic oscillator with annihilation operators $a$ and occupation number $n=a^\dagger a$.

As shown in Fig.~\ref{fig:schematic}a, rather than preparing the wave function of $L$-spins on $L$ independent qubits, in sequential simulation, one instead re-uses a single qubit (here, the transmon) for each physical spin and utilizes a small quantum memory (here, the cavity mode) to coherently store information about correlations and entanglement between spins. 
The simulation proceeds by initializing the transmon qubit into a fixed reference state, $|0\>$, entangling it with the cavity through a unitary $U$. The qubit then holds the state of the first spin in the chain, and it can be measured in any desired basis. 
The qubit is then reset to $|0\>$ and the process is iterated to prepare the second spin in the chain, and so on. 

In this way, one sequentially prepares the state of the spin chain from left to right and may sample any desired correlation function along the way.
Formally, this procedure is equivalent to sampling correlations from a resulting spin-chain state $|\psi\>$ with MPS representation:
\begin{align}
|\psi\> = \sum_{s_1\dots s_L} A^{s_1}A^{s_2}\dots A^{s_L} |s_1\dots s_L\>
\end{align}
with tensors: $A^s_{i,j} = \<s,j|U|0,i\>$ where $s=0,1$ correspond to spin $\up,\down$ respectively, and $i,j=0, 1, 2\dots$ index the states of the cavity, which can be viewed as a physical representation of the virtual bond-space of the MPS (see~\cite{foss2021holographic} for a detailed discussion of boundary conditions). 
Throughout, for convenience, we focus on translation-invariant qMPS with the same $A,U$ for each site (though this is not essential for the method).
Though the oscillator technically possesses an infinite state-space, in practice, the large-$n$ states decohere more rapidly ultimately limiting the number of usable levels to a finite number,  $D$.
In the qMPS context, $D$ corresponds to the bond dimension of the MPS, $|\psi\>$.

For example, to measure the correlation function: $\<\psi|\sigma^z_r\sigma^x_1|\psi\>$, where $|\psi\>$ is the state of the spin chain, one measures the qubit in the $x$ basis after the first implementation of $U$, then iteratively applies $U$ and resets the qubit $r-1$ times, and finally measure the qubit in the $z$ basis, which gives one statistical sample. The average of many such samples gives the correlation function. Note that, there is no additional overhead to sampling correlations in this manner compared to the case where one directly encodes $|\psi\>$ onto $L$ qubits.

\begin{figure*}
    \centering.
    \includegraphics[width=1\textwidth]{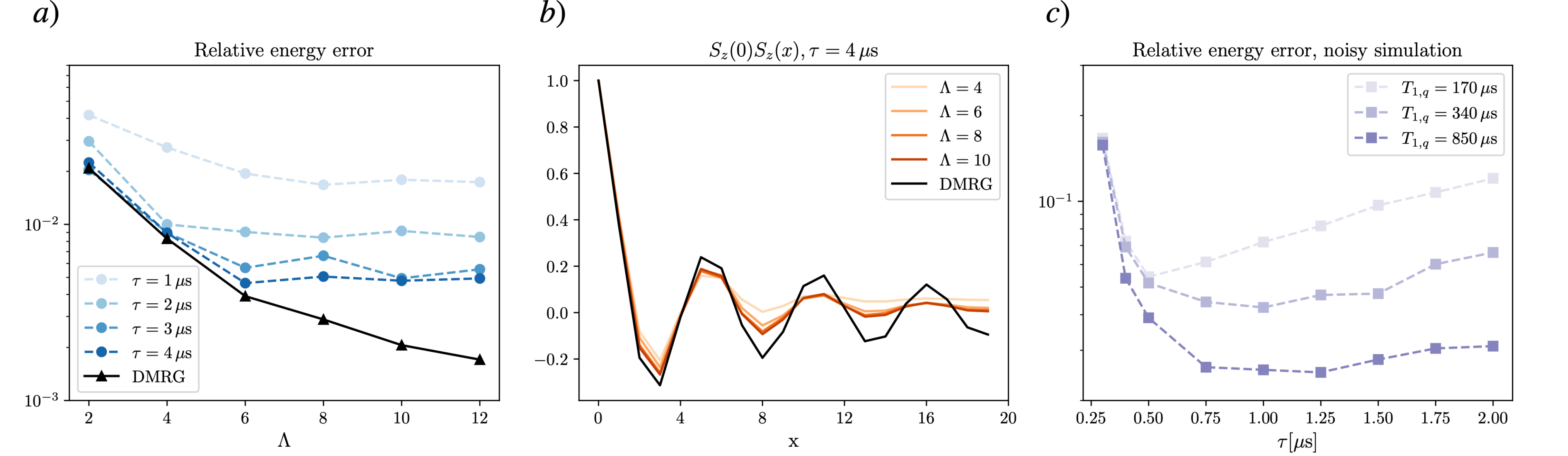}
    \caption{{\bf VQE numerical optimization results} a) Ideal VQE optimization energy result at different $\tau$, compared with classical density matrix renormalization group (DMRG) simulation. The x-axis stands for the artificial cutoff that we choose for the cavity in our simulation. b) Two-point correlation function $S_z(0)S_z(x)$ for the optimized VQE results for $\tau = 4\ \mu s$ at different finite cutoff levels. c) Noisy VQE optimization energy result versus $\tau$ in three different hypothetical transmon relaxation times. The relaxation times are chosen to be an integer multiplier of a reported hardware parameter~\cite{heeres2017implementing}: $T_{1,c} = 2700\ \mu s $, $T_{1,q} = 170\ \mu s$ and $T_{2,q} = 43\ \mu s$.
    }
    \label{fig:result}
\end{figure*}

Though it is possible to classically simulate 1d ground states using MPS methods, there are significant computational challenges for computing non-equilibrium dynamical properties and properties of highly entangled ground states in higher dimensions. Here, qTNS methods may play a role -- allowing quantum computers to benefit from the efficient compression of qTNS, while potentially exceeding classical TNS methods. 
In this vein, we remark in passing that the sequential circuit concept can be generalized to implement various tensor network geometries and higher dimensions: by using a $d$-dimensional qubit array, and an auxiliary quantum memory, one can sequentially prepare a $d+1$ dimensional tensor network state with bond-dimension equal to the Hilbert space size of the quantum memory. 
Due to the dimensional mismatch between the quantum processor and the state being simulated, this approach is sometimes referred to as ``holographic" simulation (not to be confused with other contexts of holography in physics~\cite{susskind1995world}). 
A limitation is that only isometric tensor networks~\cite{soejima2020isometric} can be implemented using physically allowed quantum operations. In 1d, the isometry constraint is known not to be a serious limitation. Its role in higher dimensions is the subject of active ongoing investigation~\cite{zaletel2020isometric,slattery2021quantum}.

\paragraph{cQED control approaches}
A key challenge in holographic simulation is to find an implementation of $U$ that represents the tensors, $A^s_{ij}$, corresponding to an MPS representation of states of physical interest. For the cQED device, $U = \mathcal{T}e^{-i\int_0^\tau H_{\rm cQED}(t)dt}$ is generated by time evolution under the cQED Hamiltonian, $H_{\rm cQED} = H_0+H_1+H_{\rm drive}$:
\begin{align}
H_0 &= \frac12 \omega_q\sigma^z+\omega_c n
\nonumber\\
H_1 &= \frac{K}{2} (a^\dagger)^2a^2+ \frac{\chi}{2} n\sigma^z+\frac{\chi'}{2}(a^\dagger)^2a^2\sigma^z
\nonumber\\
H_{\rm drive}(t) &= \Omega_c a + \Omega_q\sigma^- +{\rm h.c.}
\end{align}
where $\vec{\sigma}$ are Pauli matrices for the qubit, $a,n=a^\dagger a$ are respectively annihilation and number operators of the cavity. 
%
Simulations are performed in the rotating frame of $H_0$, leaving only $H_1+H_{\rm drive}(t)$.
For realism, we adopt the parameters from~\cite{heeres2017implementing}, and choose dispersive shift $\chi=-2\pi\times2194$ kHz, second order dispersive shift $\chi'=-2\pi\times19$ kHz and cavity non-linearity $\alpha=-2\pi\times3.7\,$ kHz. These parameters are fixed by the device design~\cite{heeres2017implementing}. 

Control is achieved through the (complex) drive waveforms for the qubit (q) and cavity (c): $\Omega_{q,c}(t)$. 
Due to the practical limitations of the wave-form generators and to deal with finite-dimensional pulse optimizations, we focus on piecewise constant drives with $N_{\rm ts} = \tau/\Delta t$ time-steps of size $\Delta t$, such that:
\begin{align}
U\[\Omega_{c,q}(t)\] = \prod_{j=1}^{N_{\rm ts}} e^{-iH_{\rm cQED}(t_j)\Delta t}.
\end{align}
We choose $\Delta t = 10\ {\rm ns}$, which is an order of magnitude higher than the typical minimum pulse control resolution, to reduce the number of variational parameters and smoothen the waveform.
To model realistic limitations on drive range, we restrict the drive amplitude to $|\Omega_{c,q}|\leq \Omega_{\rm max}=10$ MHz.

We compare two strategies to implement a desired $U$. First, we consider an ``analog"  scheme, similar to gradient ascent pulse engineering (GRAPE)~\cite{law1996arbitrary,khaneja2005optimal}, in which the individual (discretized) drive amplitudes, $\Omega_{c,q}(t_j)$ are treated as variational parameters. This method allows for potentially highly efficient control, at the expense of introducing a large number, $4N_{\rm ts}$, of variational parameters for the real and imaginary parts of $\{\Omega_{c,q}(t_j)\}|_{j=1}^{N_{\rm ts}}$. 

Second, we consider pre-compiling the pulse waveforms into a sequence of discrete, parameterized gates. Specifically, we consider alternating layers of qubit rotations $R(\vec{\phi}) = e^{-i\vec{\phi}\cdot \vec{\sigma}}$, cavity displacements $D(\alpha) = e^{\alpha a-\alpha^* a^\dagger}$, and selective number-dependent arbitrary phase (SNAP) gates~\cite{krastanov2015universal,heeres2015cavity,fosel2020efficient,kudra2022robust}: $S(\theta) = e^{i\theta(n)\sigma^z}$ where $\theta(n)$ applies a distinct, arbitrary phase for each occupation number $n$~\cite{heeres2015cavity} of the D-level cavity. The unitary is then composed of layers of these discrete gates: $U = \prod_{i=1}^N D(\alpha_i) R(\vec{\phi_i}) S(\vec{\theta_i})$, where $\{\alpha_i, \vec{\phi}_i, \vec{\theta}_i\}$ are variational parameters. As we demonstrate in Appendix~\ref{supp:snap}, sufficiently deep circuits of this form enable arbitrary control over the joint state of the qubit and oscillator. 

While both the GRAPE and SNAP-based methods have been separately well-studied and used extensively to construct non-Gaussian states of the cavity mode~\cite{hofheinz2009synthesizing,krastanov2015universal,heeres2017implementing,kudra2022robust}, a direct comparison of their performance is lacking. Moreover, many previous studies use the qubit as a sacrificial ancilla that starts and ends in a fixed reference state~\cite{heeres2015cavity,heeres2017implementing,kudra2022robust}, whereas for qMPS applications we need to achieve control over the joint entangled state of the qubit and cavity to realize unitary operations relevant to MPS-tensors for physical ground states. Next, we compare the performance of these methods for sequentially simulating a physical system.

\paragraph{Case study: non-integrable Ising model}
We will focus on different variational approaches for approximating the quantum-critical ground state of a non-integrable Ising model with self-dual perturbation (SDIM)~\cite{rahmani2015phase}:
\begin{align}    
\label{eq:sdim}
H_{\text{SDIM}} &= -\sum_{i} [J\sigma^z_i \sigma^z_{i+1} + h\sigma^x_{i}
    -V  \left(\sigma^x_i \sigma^x_{i+1} + \sigma^z_{i-1} \sigma^z_{i+1} \right) ],
\end{align}
The inclusion of $V$, spoils the exact solvability of the model with $V=0$.
We focus on $J=h=1$ and $V=0.5$ where the ground state is poised at a critical point between magnetically ordered and disordered states, and exhibits power-law decaying correlations and entanglement that diverges with system size. Such critical states can only be approximately captured by any finite bond-dimension MPS (allowing continued room for improvement with increasing bond-dimension, $D$, i.e. with the utilization of more cavity levels).
We consider two different variational tasks: unitary synthesis and variational ground state preparation (described below). 

\paragraph{Unitary synthesis}
In unitary synthesis, we start with a known classical MPS representation of the ground state with tensors $A$ and bond-dimension $D$, block encode $A^{s}_{ij}$ into a unitary, $U_{\rm target}$ using the Gram-Schmidt algorithm (see Appendix~{\ref{supp:opt}} for details), and then maximize the trace-fidelity: 
\begin{align}
\mathcal{F}(\vec{\varphi}) = \left\vert{\rm tr} \(U^\dagger_{\rm target}U(\vec{\varphi})\)\right\vert/{\rm tr}\mathbbm{1}
\end{align}
over the variational parameters, $\vec{\varphi}$. 
This procedure requires a sufficiently low bond dimension to perform each step classically, yet for small instances, it may still serve as a benchmark to compare the two variational approaches. Moreover, there may be contexts where this method can assist in achieving a quantum advantage, for example by seeding a more general variational approach with a known classical approximation in order to simplify a complex variational optimization~\cite{niu2021holographic}. Moreover, there are many settings where ground states can be efficiently prepared classically, but where simulating non-equilibrium dynamical properties (electrical or thermal conductivity, optical absorption spectra, etc.) of the system starting from the ground state may be challenging. Here, efficiently preparing the ground state on a quantum device is an important prerequisite to using quantum algorithms for computing these more challenging quantities. A prerequisite of this approach is that standard results from quantum optimal control theory~\cite{khaneja2005optimal} guarantee that an optimal solution may be found by simple gradient ascent methods, without encountering local minima trapping that can plague other high-dimensional optimization problems. 


Here, we construct the target unitaries from the MPS tensors with bond-dimension $D$ obtained by standard DMRG algorithm~\cite{white1992density}. Larger $D$ captures a more accurate approximation to the ground state but presents a more challenging unitary synthesis problem, requiring a longer control sequence time, $\tau$.
To directly compare analog and digital variational approaches a key metric will be the time it takes to achieve a given accuracy for the target state since $U$ must be implemented in time $\tau$ that is less than the qubit and cavity coherence times in order to avoid significant decoherence. Following~\cite{kudra2022robust}, we estimate that optimal control methods for synthesizing SNAP and displacement gates require $\sim 800\ {\rm ns}$ per circuit layer (also see Appendix~\ref{supp:snap}.)

Fig.~\ref{fig:schematic} shows that while infidelity, $1-\mathcal{F}$, decays exponentially with time for both the analog and digital control, the former method converges over an order-of-magnitude faster timescale. 
This apparently apparently reflects overall inefficiency in packaging drive amplitudes into a fixed sequence of SNAP gates, compared to directly optimizing the waveforms to maximize the target fidelity~\cite{meitei2021gate}.
In fact, assuming transmon coherence times of $T_2\approx40\ {\rm \mu}s$~\cite{heeres2017implementing}, it is evident that the gate-based approach will be insufficient to prepare even small bond-dimension states with $D\gtrsim 2$, whereas in this time, the analog control can successfully prepare up to $D\approx 5$ with percent-level infidelity. 
For this reason, in the remainder of the paper, we will focus exclusively on the analog approach. We note, in passing, however, that the large number of variational parameters in GRAPE could lead to longer wall-clock time to optimize -- which will depend in a detailed fashion on the optimization scheme chosen. Here, we simply focus on the time, $\tau$, needed to coherently execute the circuit -- as this is the ultimate limitation of using quantum coherent computations.

While this unitary-synthesis study clearly highlights the advantages of using analog-style pulse-level control over gate-based methods for qMPS applications, in many practical settings one does not have a classical representation of the ground state to work with, and an alternative approach is required.

\paragraph{Variational ground state preparation}
The above unitary synthesis procedure requires a known classical MPS representation of the system. In many practical instances, one instead has only a model Hamiltonian and would like to approximate its (unknown) ground state. A common approach to this problem is variational ground state preparation using the now-standard variational quantum eigensolver (VQE) method~\cite{peruzzo2014variational,bauer2016hybrid,kandala2017hardware,tilly2022variational}, which readily generalizes to holographic qMPS simulations (holoVQE)~\cite{foss2021holographic}. 
Here, one variationally minimizes the expected energy:
\begin{align}
E_{\rm VQE}(\vec{\varphi}) =\<\psi_{\rm MPS}(\vec{\varphi})|H|\psi_{\rm MPS}(\vec{\varphi})\>,
\end{align}
 with respect to the variational parameters, $\vec{\varphi}$, where $H$ is the Hamiltonian for the spin chain in question, and $|\psi_{\rm MPS}\>$ is a qMPS. 
 Empirical studies of various physical models reveal that the ground-state energy tends to decrease polynomially in the number of variational parameters~\cite{haghshenas2021variational}.
Since we have seen that pulse level control can significantly outperform gate-based methods for unitary synthesis, we adopt an analog approach to VQE in which discretized versions of the drive waveforms are treated as variational parameters. 
This analog (``gate-free") approach to VQE, dubbed ctrl-VQE was introduced in~\cite{meitei2021gate} and modeled in qubit-only architectures. 
Here we numerically explore combining ctrl-VQE and holoVQE approaches in a coupled transmon/cavity device.
In contrast to unitary synthesis, VQE can be carried out without prior classical knowledge of the ground state.
In principle, for deep circuits, VQE approaches can face a complex, high-dimensional optimization landscape with local traps and barren plateaus. To mitigate these issues, one could follow a parallel/iterative optimization method~\cite{haghshenas2021variational} that incrementally grows the number of variational parameters (see also Appendix~\ref{supp:opt}), which have empirically proven successful for qMPS applications. However, for our purposes, we find that it suffices to adopt a simpler approach, in which we select the best result from a batch of 50 randomly initialized runs.

A practical issue is that classically simulating a cQED device requires truncating the infinite-dimensional cavity space to a finite number of levels. While many possible truncation schemes exist, since high-$n$ states of the cavity will have a shorter coherence time, we choose to truncate by simulating only cavity occupation numbers below a cutoff $\Lambda$. To ensure the optimizer does not take advantage of this artificial cutoff, we add a penalty term to $E_{\rm VQE}$ that penalizes occupation numbers in the range $\Lambda' \leq n \leq \Lambda$, where $\Lambda'$ is a fixed amount lower than $\Lambda$ (in practice we choose $\Lambda'=\Lambda-6$.

Fig~\ref{fig:result}a shows the resulting energy error $1-E_{\rm VQE}/E_0$ for the critical Ising model in Eq.~\ref{eq:sdim}, for various $\Lambda$, and $\tau$ from the holoVQE approach. Here, we estimate the true ground state energy, $E_0$, from classical DMRG with large bond dimension DMRG with $D=80$. 
We observe that the holoVQE energy errors initially decrease with the simulation-size cutoff, $\Lambda$, before converging to a fixed value that is insensitive to the artificial cutoff. This convergence reflects that the simulation value should match the true behavior of the device.
Fig~\ref{fig:result}b shows the correlation functions, $\<\psi|\sigma^z_r\sigma^z_0|\psi\>$
at different spatial separations, $r$ for the converged, optimized qMPS for different $\tau$. We observe that the correlations are accurately captured out to several sites, and display the qualitative oscillations at much longer ranges, and that increasing $\tau$ gradually reduces the discrepancy between the qMPS and large-$D$ DMRG (essentially exact) correlations.

\paragraph{Impact of decoherence}
In practice, the continued improvement of accuracy with the length of the drive waveform will ultimately be limited by decoherence effects, which have so far been neglected in our simulations.
To assess the practical limitations due to noise and decoherence, we incorporate amplitude decay and dephasing processes into the holoVQE simulations (see Appendix~\ref{supp:noi} for details), and re-optimize the results in the presence of noise. The results are shown for various wave-form time $T$ and decay ($T_1$) and dephasing ($T_2$) times for the qubit and cavity. One observes a clear minimum in the relative energy error as a function of $T$ reflecting a tradeoff between the increase in control and the increased impact of noise with increasing length of drive wave-form. 

For the $T_{1,2}$ times reported in~\cite{heeres2017implementing}, the holoVQE approach can effectively utilize $D\approx 6$ levels of the cavity. By comparison, accessing qMPS with this bond dimension in a qubit-only architecture would require $1+\lceil \log_2D\rceil \approx 4$ qubits, whereas, in the cQED setting, it can be achieved with only a single device, showing the potential advantage of the latter. 
Improved coherence times, would amplify this hardware advantage by allowing control over an even larger number of quantum levels per device.

\paragraph{Discussion}
This numerical study demonstrates the viability of using cQED architectures for sequential/holographic simulation of many-body systems. The larger Hilbert space and longer coherence time of the cavity modes enable one to access larger bond-dimension (more entangled) quantum many-body states per device. 
The longer coherence time of the cavity mode also offers an advantage in this setting, since the less coherent qubit is frequently measured and reset, limiting the build-up of errors.
Based on these results, we estimate that, with the present technology, a single cQED device could be used to simulate an entangled many-body spin chain with several-site correlations.

There are a number of natural avenues to build on and extend this work. 
The size and complexity of the sequential/holographic simulations could be dramatically extended by introducing multiple cavity modes~\cite{naik2017random,chakram2021seamless}.
Assuming that $D\approx 6-10$ levels of each cavity mode can be controlled (which our simulations indicate is realistic for near-term accessible noise rates), then an $N$-cavity system would give access to bond dimension $D^N$, which would naively enable access to classically inaccessible bond-dimensions for $N$ as low as $5-10$.
The parallel control of multiple cavities using a single qubit has already been successfully demonstrated. A key question for future theoretical investigation is how well such multi-cavity devices can be controlled to perform holographic simulations of complex many-body states relevant to material science or chemistry.
Another possibility would be to encode a logical qubit into the cavity~\cite{heeres2017implementing} in order to integrate error-correction/suppression schemes~\cite{gottesman2001encoding,mirrahimi2014dynamically,ofek2016extending} directly into holographic simulations, with much less hardware overhead than in qubit-only architectures.
Finally, there has been significant recent progress in using cQED devices to engineer tailored forms of dissipation to synthesize interesting quantum channels~\cite{wang2019autonomous,cattaneo2021engineering,chen2022tuning}. It may be interesting to adapt these methods to implement quantum channels that emulate the transfer matrix of a qMPS representation for physical states.

\vspace{4pt}\noindent{\it Acknowledgements -- } 
We thank Michael Foss-Feig and Michael Zaletel for their insightful conversations. This work was supported by the US Department of Energy DOE DE-SC0022102.  A.C.P. thanks the Aspen Center for Physics where part of this work was completed. Y.Z. extends gratitude to Timothy Hsieh and the Perimeter Institute for Theoretical Physics for their hospitality, where a portion of this work was finished.

\bibliography{cqedbib.bib}
\appendix{}
\section{Full control with SNAP circuits}\label{supp:snap}
Previous results~\cite{lloyd1999quantum,krastanov2015universal} have shown that the universal control over the cavity can be accomplished with displacement operations and gates with the form: $e^{i \theta(n)\otimes\openone}$, which is the action of the SNAP gate with transmon set to $|0\>$ throughout the gate operation. By contrast, qMPS applications require preparing the transmon in an arbitrary superposition of $|0\>$ and $|1\>$ that are entangled with the cavity mode. 
Hence the SNAP gate is actually a controlled-SNAP gate: $e^{i \theta(n)\otimes\sigma^z}$. In this section, we show that combining this transmon-controlled SNAP gate with cavity displacements and arbitrary qubit rotations results in a universal gate set.
\newtheorem{theorem}{Theorem}
\begin{theorem}
The gate set: $\{D(\alpha), S(\theta(\circ)), R(\vec{\phi})\}$ is universal.
\end{theorem}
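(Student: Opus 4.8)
The plan is to show that the three gate families generate a dense subgroup of the unitary group on the truncated qubit-cavity Hilbert space $\mathbb{C}^2 \otimes \mathbb{C}^D$, by exhibiting a generating set of Hamiltonians whose repeated commutators span the full Lie algebra $\mathfrak{u}(2D)$ (equivalently $\mathfrak{su}(2D)$ up to the irrelevant global phase). The standard route is the Lie-algebra rank condition: if the Hamiltonians available as generators of the gate set, together with all their nested commutators, span $\mathfrak{su}(2D)$, then the gate set is universal. So first I would identify the Hamiltonian generators: qubit rotations $R(\vec\phi)$ supply $\sigma^x, \sigma^y, \sigma^z$ acting trivially on the cavity; displacements $D(\alpha)$ supply the quadratures $a + a^\dagger$ and $i(a - a^\dagger)$ (projected onto the $D$-level truncation) acting trivially on the qubit; and the controlled-SNAP $S(\theta(\circ))$ supplies, for each choice of the phase profile, an operator of the form $\big(\sum_{n} \theta_n \ket{n}\bra{n}\big)\otimes \sigma^z$ — in particular, by choosing $\theta_n = \delta_{n,m}$ one gets the diagonal generator $\ket{m}\bra{m}\otimes\sigma^z$ for every $m < D$.

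Next I would build up the algebra in stages. Stage one: using only the SNAP generators $\ket{m}\bra{m}\otimes\sigma^z$ and the qubit generators $\mathbbm 1 \otimes \sigma^{x,y,z}$, commutators like $[\ket{m}\bra{m}\otimes\sigma^z,\, \mathbbm 1\otimes\sigma^x] = 2i\,\ket{m}\bra{m}\otimes\sigma^y$ generate all of $\ket{m}\bra{m}\otimes\mathfrak{su}(2)$ for each fixed $m$, i.e. independent $\mathfrak{su}(2)$ control in every cavity-number sector. Stage two: I need off-diagonal cavity operators $\ket{m}\bra{m'}$ with $|m - m'| = 1$; these come from the displacement generators, since $a + a^\dagger$ and $i(a-a^\dagger)$ have nonzero matrix elements exactly between adjacent number states. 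Commuting a displacement generator (which carries $\mathbbm 1$ on the qubit) with a block-diagonal qubit operator localized in sector $m$ produces operators with support on the $(m, m{+}1)$ off-diagonal block tensored with nontrivial qubit content; iterating these commutators, and chaining adjacent transitions to reach arbitrary $\ket{m}\bra{m'}$, should fill out the off-diagonal blocks. Combining stages one and two, one obtains a spanning set for $\mathfrak{su}(2D)$: diagonal blocks give the Cartan-like pieces and the $\mathfrak{su}(2)$-per-sector pieces, off-diagonal blocks give the root vectors, and their commutators close. Finally, I would invoke the standard theorem that a Lie group generated by $\exp$ of a Lie-algebra-generating set is the full group (here up to global phase), so the reachable set of unitaries is dense in $SU(2D)$, establishing universality; the truncation to finite $D$ is exactly the physically relevant statement since only $D$ cavity levels are used.

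The main obstacle is stage two — verifying that the displacement quadratures, whose action is fixed (one cannot choose their matrix elements the way one chooses the SNAP phase profile), genuinely couple \emph{all} adjacent sectors in a way that survives the commutators with the sector-localized qubit operators, so that no invariant subspace is left behind. Concretely, one must check that the nested commutators do not accidentally preserve some grading (e.g. a parity or a block structure) that would obstruct reaching every $\ket{m}\bra{m'}$; the cleanest way to rule this out is to exhibit an explicit short sequence of commutators producing, say, $\ket{0}\bra{1}\otimes\sigma^z - \ket{1}\bra{0}\otimes\sigma^z$ and its $\sigma^x,\sigma^y$ partners, then argue by induction on $|m-m'|$ using chained single-step transitions. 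A secondary technical point is that $D(\alpha)$ on the truncated space is not exactly generated by a finite-dimensional quadrature (truncation and exponentiation do not commute), so I would either work with the ideal infinite-dimensional generators and then restrict, or note that the truncated displacement still has the needed adjacent-sector matrix elements, which is all the argument uses. Once the adjacency is secured, the rest is routine Lie-algebra bookkeeping.
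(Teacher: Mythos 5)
Your proposal is correct in outline, but it takes a genuinely different route from the paper. The paper works in the (formally infinite-dimensional) oscillator setting and follows the Lloyd--Braunstein continuous-variable strategy: it uses only the linear part of the SNAP freedom, $\theta(n)=\epsilon n$, to obtain the generator $n\sigma^z$, combines it with the quadratures $q,p$ from displacements and $\sigma^{x,y}$ from qubit rotations, and then exhibits a short chain of commutator identities ($[n\sigma^z,q]\sim p\sigma^z$, $[q\sigma^x,q\sigma^y]\sim q^2\sigma^z$, and two recursion relations) showing that all polynomials $q^r p^s$ and $q^j p^k\vec\sigma$ --- a complete basis for the relevant Lie algebra --- are reachable. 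You instead truncate to $\mathbb{C}^2\otimes\mathbb{C}^D$, exploit the \emph{full} per-level SNAP freedom to extract the projector generators $\ket{m}\!\bra{m}\otimes\sigma^z$, and build $\mathfrak{su}(2D)$ out of matrix units: per-sector $\mathfrak{su}(2)$ blocks from stage one, adjacent off-diagonal blocks from commutators with the quadratures in stage two, chained by induction on $|m-m'|$. This is closer in spirit to the original Krastanov et al.\ universality argument for SNAP-plus-displacement on the cavity alone. Your version buys a clean, finite-dimensional rank condition (no unbounded-operator or density subtleties) and makes stage one trivial, at the cost of the truncation issue you correctly flag (the truncated displacement is not exactly $e^{-i\epsilon q}$ restricted to the first $D$ levels) and of having to verify explicitly that no invariant block structure survives stage two; your proposed fix --- exhibit one explicit adjacent-transition generator such as $(\ket{0}\!\bra{1}-\ket{1}\!\bra{0})\otimes\sigma^z$ and induct --- is the right way to close that, and indeed commuting $q\otimes\mathbbm{1}$ with $\ket{m}\!\bra{m}\otimes\sigma^z$ and then with neighboring projectors isolates single-step hops. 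The paper's version buys brevity and avoids truncation entirely, but is correspondingly more schematic about operator domains and about why the stated polynomials exhaust the algebra. Both arguments are sound at the physics level of rigor adopted here.
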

\begin{proof}
To establish the universality, it suffices to confirm that one can generate any arbitrary polynomial of the form $q^rp^s,q^jp^k\vec{\sigma}$, which forms a complete basis for the Lie algebra of the qubit and oscillator. Here $q = \frac{1}{\sqrt{2}}(a+a^\dagger)$ and $p= \frac{i}{\sqrt{2}}(a^\dagger-a)$ are the (dimensionless) oscillator coordinate and momentum, $r,s$ are arbitrary positive integers.
Fixing $\theta(n) = \eps n$, and considering infinitesimal rotations, and displacements, the generators for the gate set $\{D(\alpha), S(\theta(\circ)), R(\vec{\phi})\}$ include: $\{n\sigma^{x,y,z},q, p,\sigma^{x,y}\}$. 
By repeated use of the BCH formula $e^{i\eps A}e^{i\eps B}e^{-i\eps A}e^{-i\eps B} \approx e^{-\eps^2[A,B]}+\mathcal{O}(\eps^3)$, one can generate any commutator of pairs of generators.
Then, combining the following sequence of commutator identities
\begin{align}
\[n\sigma^z,q\]&\sim p\sigma^z \nonumber\\
\[q\sigma^x,q\sigma^y\] &\sim q^2 \nonumber \sigma^z\\
\[q^{r+1}p^s\sigma^z,p\sigma^z\] &\sim q^rp^s \nonumber\\
\[q^{r+1}p^s\sigma^z,p\] &\sim q^rp^s\sigma^z,
\end{align}
establishes, through nested applications of the above BCH formula, that one can indeed generate the polynomials required for universal control with this gate set.
\end{proof}
To empirically verify the universal control for this circuit, we conduct a small batch of simulations (Fig.~\ref{fig:snap_proof}) using optimal control techniques to synthesize Haar random unitaries over an eight-level Hilbert space with SNAP/displacement circuits.

\begin{figure}
    \centering.
    \includegraphics[width=.42\textwidth]{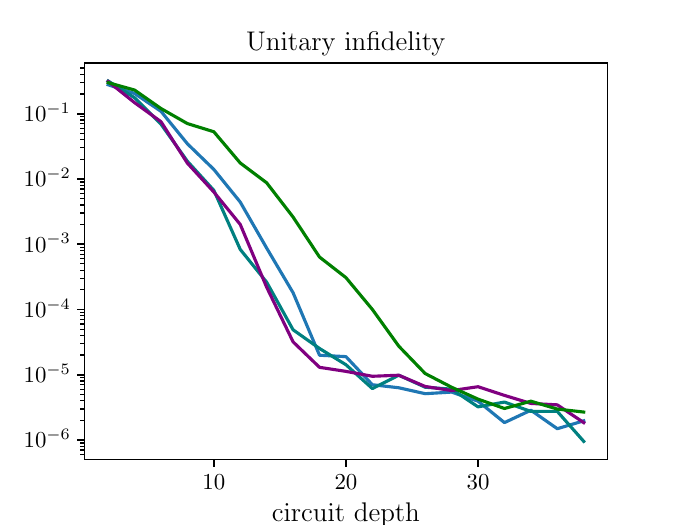}
    \caption{{\bf Synthesizing Haar random unitaries}. We synthesize $8\times8$ Haar random unitaries over the cavity and the transmon with the SNAP circuit demonstrated in Fig.~\ref{fig:schematic}, and we observe a similar result to the study in~\cite{fosel2020efficient}, where a universal control over the cavity alone is aimed. Each curve stands for a different random instance.
}
    \label{fig:snap_proof}
\end{figure}

\paragraph{SNAP Implementation time} 
To directly compare the time to synthesize quantum operations to implement a known MPS with either SNAP or GRAPE methods, we need to estimate the time required to perform each layer of the SNAP-displacement circuit.
In this appendix, we estimate this time using the results of previous optimal-control studies of synthesizing SNAP gates~\cite{krastanov2015universal}.
 The first proposed implementations are to implement a SNAP gate with a pair of $\pi$ pulses with carrier frequency, $\approx \omega_c + n\chi$ associated with a particular cavity mode occupation, $n$. The associated phase is controlled by offsetting the phase of the two pulses~\cite{krastanov2015universal}. Multiple such cavity-dependent rotations can be implemented in parallel by frequency multiplexing.
Assuming the absence of higher order dispersive shifts, using this method with Gaussian pulses requires time $\sim \(2\pi/\chi\)\sqrt{\log1/\epsilon}$ to implement a SNAP gate with infidelity $\sim \epsilon$ due to off-resonant coupling. For, example with the $\chi$ used in our numerics, achieving a $\epsilon \lesssim 10^{-4}$ SNAP would require time $\sim 1000$ ns. 
 This two-pulse approach can be improved using optimal control theory (OCT) methods to variationally optimize a SNAP waveform~\cite{kudra2022robust}. 
 %
With a qubit-cavity dispersive shift at $2\pi\times3.14$ MHz, which is 1.4 times faster than our $\chi$, Ref.~\cite{kudra2022robust} reached a 500 ns implementation time with high fidelity. It is natural to assume that the implementation time is inversely proportional to $\chi$, in which case this translates to $\sim700$ ns for the parameters used in our simulations. On the other hand, displacement operations can be implemented by resonantly driving the cavity with pulses with a sine-squared envelope and a calibrated amplitude proportional to $\alpha$, which is much faster than the time required by a SNAP gate and can be implemented within less than 100 ns~\cite{axline2018demand,fosel2020efficient}.  Therefore, we estimate the total time for implementing each layer of the SNAP circuit as $\approx800$ ns. In our simulations, we treat each SNAP gate as perfectly implemented.

\section{Optimization details}\label{supp:opt}
\paragraph{Isometries to target unitaries}As mentioned above, with both methods, we would like to compare the time cost of a classically calculated $U_{\text{DMRG}}$ to some certain accuracy. It's worth noticing that in cQED simulation, we are truncating the experimentally infinite level Hilbert space using a finite cut-off $m_c$. Suppose setting $m_c = m_b$, the one has no information at all about what the experimentally implemented unitary does in an infinite Hilbert space: the wave-function will eventually occupy higher levels. Therefore, we embed the target unitary into a Hilbert space with higher cavity levels:
\begin{align}U_{\text{targ}} =
\begin{pmatrix} 
U_{\text{DMRG}} & 0 \\
0 & \mathbbm{1}
\end{pmatrix}
\end{align}
That is, we demand the total time evolution returns identity on the cavity levels between $m_b$ and $m_c$, forming a ``buffer'' between the finite logical space and the rest of the (unknown) infinite Hilbert space. We numerically certify that $m_c\lesssim 2 m_b$ suffices to for the time length we want.
\paragraph{Batch optimization} Optimizing a large parametrized quantum circuit (PQC) can be difficult due to the so-called barren plateau problems. Both random initialization and gradient-free methods can result in the gradient of the objective function becoming negligible. To combat this, we've adopted a batch-sequential optimization strategy that is first proposed in~\cite{haghshenas2021variational} and developed in~\cite{zhang2022qubit}.

For the SNAP circuit optimization, we start with a batch of single-layer circuits, each with randomly initialized parameters. We then optimize each circuit in the batch using a local optimizer, selecting the circuit with the best outcome. This selected circuit is used to generate another batch of circuits with an additional identity gate layer and slight randomness in gate parameters. This approach retains the desirable features of the optimized first layer while also offering the opportunity to escape from local minima in the target function landscape.

We repeat this process with each new batch, increasing the depth of the circuits until we reach our target. As the number of parameters increases, we decrease the amount of randomness to ensure a good performance.\\

\section{Simulating noise}\label{supp:noi}
To study the effect of decoherence sources and improve optimization results, we simulate the system dynamics using a discretized master equation
with the known physical parameters. We start with the continuous-time Lindblad master equation:
\begin{align}
\frac{\d \rho}{\d t} =  &-i[H(t),\rho(t)]\\
                                &+\frac{1}{T_{1,c}}\mathcal{D}[a]\rho +\frac{1}{T_{1,q}}\mathcal{D}[\sigma^-]\rho +\frac{1}{4T_{2,q}}\mathcal{D}[\sigma^z]\rho 
\end{align}
where $T_{1,c} = 2700\ \mu s $, $T_{1,q} = 170\ \mu s$ are the relaxation times for the oscillator and transmon; $T_{2,q} = 43\ \mu s$ is the transmon dephasing time; $a$ and $b$ stand for the annihilation operators for the oscillator and transmon, respectively. Here the ``dissipator" $\mathcal{D}$ is defined as:
\begin{align}
\mathcal{D}[\mathcal{O}]\rho = \mathcal{O}\rho \mathcal{O}^{\dag} - \frac{1}{2} \{\mathcal{O}^{\dag} \mathcal{O}, \rho\}
\end{align}
The Lindblad equation can be thought of continuous measurement performed on a quantum dynamical system and it forms a complete positive trace-preserving (CPTP) mapping. To capture the noise, the tensor-network formalism over some short period of time, $\tau$, we cast it as a discretized quantum channel: 
\begin{align}
\mathcal{L}:\rho\rightarrow \mathcal E(U^\dag\rho U)
\end{align}
where 
$$ \mathcal E(\rho) = (\openone + \frac{\tau}{T_{1,c}}\mathcal{D}[a] +\frac{\tau}{T_{1,q}}\mathcal{D}[\sigma^-] +\frac{\tau}{4T_{2,q}}\mathcal{D}[\sigma^z])\rho $$
which is illustrated in Fig.~\ref{fig:noise}.

\begin{figure}[b!]
    \centering.
    \includegraphics[width=.42\textwidth]{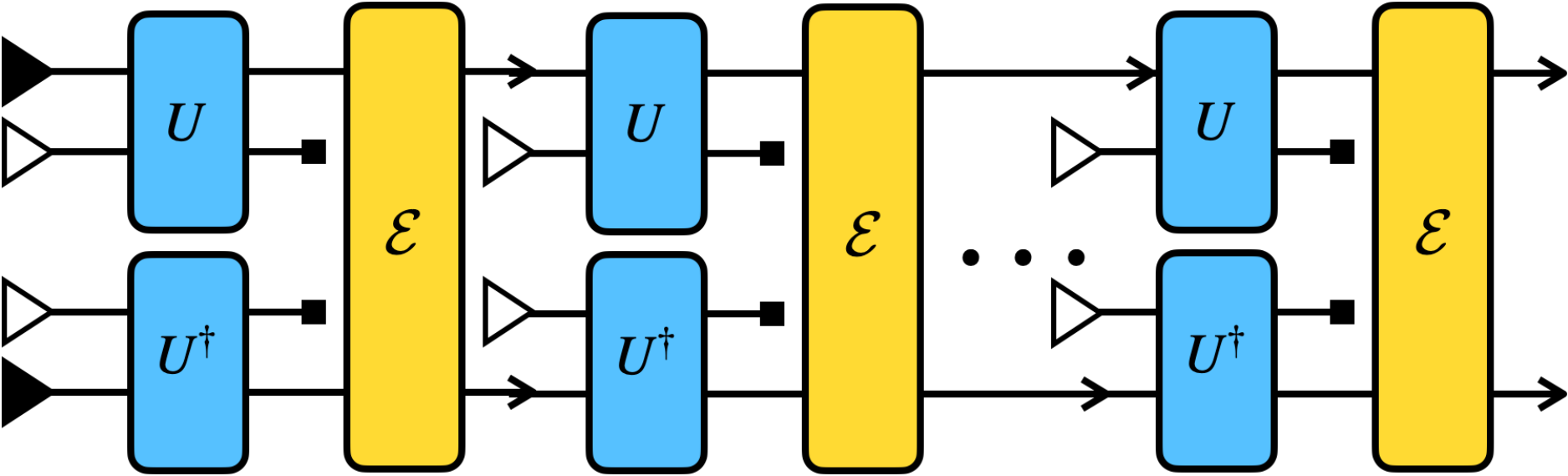}
    \caption{{\bf Simulating noise}. A matrix product density operator representation of our discretized Lindbladian simulation. The black and white triangles stand for zero states on the cavity and the transmon.}
    \label{fig:noise}
\end{figure}

{}

\end{document}